\documentclass[prl,aps,twocolumn,superscriptaddress]{revtex4-1}
\usepackage[dvips]{graphicx}
\usepackage[british]{babel}
\usepackage{amsmath,amssymb,amsthm,dsfont,hyperref,ctable,url,braket,mathtools}
\usepackage[percent]{overpic}
%    Q-circuit version 1.2
%    Copyright (C) 2004  Steve Flammia & Bryan Eastin, 4/23/06
%    This program is free software; you can redistribute it and/or modify
%    it under the terms of the GNU General Public License as published by
%    the Free Software Foundation; either version 2 of the License, or
%    (at your option) any later version.
%
%    This program is distributed in the hope that it will be useful,
%    but WITHOUT ANY WARRANTY; without even the implied warranty of
%    MERCHANTABILITY or FITNESS FOR A PARTICULAR PURPOSE.  See the
%    GNU General Public License for more details.
%
%    You should have received a copy of the GNU General Public License
%    along with this program; if not, write to the Free Software
%    Foundation, Inc., 59 Temple Place, Suite 330, Boston, MA  02111-1307  USA

\usepackage[matrix,frame,arrow]{xy}
\usepackage{amsmath}
    % Defines Dirac notation.
\newcommand{\qw}[1][-1]{\ar @{-} [0,#1]}
    % Defines a wire that connects horizontally.  By default it connects to the object on the left of the current object.
    % WARNING: Wire commands must appear after the gate in any given entry.

    % Defines a wire that connects vertically.  By default it connects to the object above the current object.
    % WARNING: Wire commands must appear after the gate in any given entry.
\newcommand{\cw}[1][-1]{\ar @{=} [0,#1]}
    % Defines a classical wire that connects horizontally.  By default it connects to the object on the left of the current object.
    % WARNING: Wire commands must appear after the gate in any given entry.

    % Defines a classical wire that connects vertically.  By default it connects to the object above the current object.
    % WARNING: Wire commands must appear after the gate in any given entry.

    % Boxes the argument, making a gate.

    % Inserts a measurement meter.

    % Inserts a measurement bubble with user defined text.

    % Inserts a measurement tab with user defined text.
\newcommand{\measureD}[1]{*{\xy*+=+<.5em>{\vphantom{\rule{0em}{.1em}#1}}*\cir{r_l};p\save*!R{#1} \restore\save+UC;+UC-<.5em,0em>*!R{\hphantom{#1}}+L **\dir{-} \restore\save+DC;+DC-<.5em,0em>*!R{\hphantom{#1}}+L **\dir{-} \restore\POS+UC-<.5em,0em>*!R{\hphantom{#1}}+L;+DC-<.5em,0em>*!R{\hphantom{#1}}+L **\dir{-} \endxy} \qw}
    % Inserts a D-shaped measurement gate with user defined text.

    % Draws a multiple qubit measurement bubble starting at the current position and spanning #1 additional gates below.
    % #2 gives the label for the gate.
    % You must use an argument of the same width as #2 in \ghost for the wires to connect properly on the lower lines.

    % Draws a multiple qubit D-shaped measurement gate starting at the current position and spanning #1 additional gates below.
    % #2 gives the label for the gate.
    % You must use an argument of the same width as #2 in \ghost for the wires to connect properly on the lower lines.

    % Inserts an unconnected control.

    % Inserts a unconnected control-on-0.

    % Inserts a unconnected control-on-0.

% Inserts a control and connects it to the object #1 wires below.

    % Inserts a control-on-0 and connects it to the object #1 wires below.

    % Inserts a CNOT target.

    % Inserts half a swap gate. 
    % Must be connected to the other swap with \qwx.

    % Draws a multiple qubit gate starting at the current position and spanning #1 additional gates below.
    % #2 gives the label for the gate.
    % You must use an argument of the same width as #2 in \ghost for the wires to connect properly on the lower lines.

    % Leaves space for \multigate on wires other than the one on which \multigate appears.  Without this command wires will cross your gate.
    % #1 should match the second argument in the corresponding \multigate. 

\newcommand{\push}[1]{*{#1}}
\newcommand{\Qcircuit}[1][0em]{\xymatrix @*=<#1>}
    % Defines \Qcircuit as an \xymatrix with entries of default size 0em.  The optional argument, #1, is for use with clusters, and allows you
    % to fix the size of the nodes.  I would not advise using it with normal circuits.

    % When Qcircuit has been passed the optional argument for cluster states, this command produces a round node of the size specified in that
    % argument.  The optional argument #2 specifies the contents of a node, while optional argument #1 is a secondary label.  

    % Draws a wire or connecting line to the element #1 rows down and #2 columns forward.

    % Same as \ghost except it omits the wire leading to the left. 
%%%%%%%%%%%%%%%%%%%%%%%%%%%%%%%%%%%%%%%%%%%%%%%%%%%%%%%%%%%%%%%%%%%%%%%%%%%%%%%%%%%%%%%%%%

   % Draws a multiple qubit reverse-D-shaped preparation gate starting at the current position and spanning #1 additional gates below.
   % #2 gives the label for the gate.
   % You must use an argument of the same width as #2 in \pureghost for the wires to connect properly on
% the lower lines.
\newcommand{\prepareC}[1]{*{\xy*+=+<.5em>{\vphantom{#1\rule{0em}{.1em}}}*\cir{l^r};p\save*!L{#1} \restore\save+UC;+UC+<.5em,0em>*!L{\hphantom{#1}}+R **\dir{-} \restore\save+DC;+DC+<.5em,0em>*!L{\hphantom{#1}}+R **\dir{-} \restore\POS+UC+<.5em,0em>*!L{\hphantom{#1}}+R;+DC+<.5em,0em>*!L{\hphantom{#1}}+R **\dir{-} \endxy}}
   % Inserts a reverse-D-shaped preparation gate with user defined text.

\newcommand{\Tr}{\operatorname{Tr}}
\newcommand{\rank}{\operatorname{rank}}
\newcommand{\pds}[2]{\mathcal{S}_{#1}(#2)}
\newcommand{\N}[1]{\left|\!\left|{#1}\right|\!\right|}
\newcommand{\eig}[2]{\lambda_{#1}\left(#2\right)}

\DeclarePairedDelimiter{\floor}{\lfloor}{\rfloor}

\newtheorem{prop}{Proposition}
\newtheorem{thm}{Theorem}
\newtheorem{cor}{Corollary}

\begin{document}

\title{Device-independent tests of quantum measurements}

\author{Michele \surname{Dall'Arno}}

\email{cqtmda@nus.edu.sg}

\affiliation{Centre    for   Quantum    Technologies,   National
  University of Singapore, 3 Science Drive 2, 117543, Singapore}

\author{Sarah \surname{Brandsen}}

\email{sbrandse@caltech.edu}

\affiliation{Centre    for   Quantum    Technologies,   National
  University of Singapore, 3 Science Drive 2, 117543, Singapore}

\affiliation{California   Institute  of   Technology,  1200   E.
  California Blvd, Pasadena, CA 91125, United States}

\author{Francesco \surname{Buscemi}}

\email{buscemi@is.nagoya-u.ac.jp}

\affiliation{Graduate School of Informatics, Nagoya
  University, Chikusa-ku, Nagoya, 464-8601, Japan}

\author{Vlatko \surname{Vedral}}

\email{phyvv@nus.edu.sg}

\affiliation{Atomic  and  Laser Physics,  Clarendon  Laboratory,
  University  of  Oxford,  Parks  Road,  Oxford  OX13PU,  United
  Kingdom}

\affiliation{Centre    for   Quantum    Technologies,   National
  University of Singapore, 3 Science Drive 2, 117543, Singapore}

\date{\today}

\begin{abstract}
  We consider  the problem of  characterizing the set  of input-output
  correlations that can  be generated by an  arbitrarily given quantum
  measurement.   Our main  result is  to provide  a closed-form,  full
  characterization of  such a  set for any  qubit measurement,  and to
  discuss its geometrical interpretation.  As applications, we further
  specify  our results  to the  cases of  real and  complex symmetric,
  informationally complete measurements and mutually unbiased bases of
  a qubit, in the presence of isotropic noise. Our results provide the
  optimal device-independent tests of quantum measurements.
\end{abstract}

\maketitle

In operational quantum theory, it is a natural question
to ask  whether a  given data  sample, provided  in the
form   of   a  conditional   probability   distribution
representing the measured  input-output correlation, is
compatible  with  a  particular  hypothesis  about  the
theoretical   model  underlying   the  experiment.    A
theoretical  model can  be more  or less  specific: for
example, it could consist only of a general hypothesis
about the theory  describing the physics, as in the
case in  Bell tests~\cite{Bel64, CHSH69, Cir80},  or it
could be extremely detailed, as  in the case
of  a   tomographic  reconstruction~\cite{PCZ97,  CN97,
  DL00}.   More   generally,  a  hypothesis   could  be
specific  about a  \textit{portion}  of the  underlying
model, while leaving  the remaining elements completely
uncharacterized.

Here we  address hypotheses about the measurement producing the final outcomes of the experiment.  
This is  the problem of characterizing the set $\pds{}{\pi}$  of all input-output correlations
$p_{y|x}=\Tr[\rho_x\   \pi_y]$   compatible   with   an
arbitrarily given quantum measurement  $\pi := \{ \pi_y
\}$  (i.e., the  hypothesis)  and any  family of  input
quantum states $\{\rho_x \}$, namely,
\begin{align}
  \label{eq:meastest}
  p_{y|x}=\Tr[\rho_x\ \pi_y]
  \; \longleftrightarrow \;
  \begin{aligned}
    \Qcircuit @C=4pt  @R=4pt {  \push{x \;}  & \prepareC{\rho_x}
      \cw & \measureD{\pi_y} & \cw & \push{\; y} }
  \end{aligned} \; .
\end{align}
We first note  that   a  correlation  $p$  is
compatible with  measurement $\pi$  if and only  if for
any fixed $x$  there exists a state  $\rho_x$ such that
$p_{y|x} = \Tr[\rho_x \pi_y]$.  Hence, $\pds{}{\pi}$ is
fully characterized by the  range of $\pi$, namely, the
set  $\pds{1}{\pi}$  of  output  distributions  $q_y  =
\Tr[\rho\ \pi_y]$ generated by  $\pi$ for varying input
state  $\rho$.   This is  in  stark  contrast with  the
analogous   problem  of   characterizing  the   set  of
correlations compatible  with a given  quantum channel,
which    in   general    requires    more   than    one
input~\cite{DBB16}.

Our main  result is  a closed-form  characterization of
the  range   $\pds{1}{\pi}$,  and  hence  of   the  set
$\pds{}{\pi}$ of all  compatible correlations, when the
hypothesis $\pi$ is a  qubit measurement.  It turns out
that   an   output   distribution  $q_y$   belongs   to
$\pds{1}{\pi}$ if and only if
\begin{align}
  \label{eq:range}
  \begin{cases}
    (\openone - Q^+ Q) (q - t) = 0, \\
    (q - t)^T Q^+ (q - t) \le 1,
  \end{cases}
\end{align}
where  $\openone$ is  the identity  matrix, $t$  is the
vector $t_y  := \frac12 \Tr[\pi_y]$, $Q$  is the matrix
$Q_{y_0,y_1}  :=  \frac12  \Tr[\pi_{y_0}  \pi_{y_1}]  -
\frac14 \Tr[\pi_{y_0}] \Tr[\pi_{y_1}]$, and $(\cdot)^+$
represents               the              Moore-Penrose
pseudoinverse~\cite{AG03}.   Imposing  the   system  of
equalities   in   Eq.~\eqref{eq:range}  causes   linear
dependencies --  if any  -- among  measurement elements
$\pi_y$  to   emerge  as  linear  constraints   on  the
probabilities.  Such linear dependencies are present in
any  overcomplete  measurement.   Provided  that  these
constraints   are   satisfied,    the   inequality   in
Eq.~\eqref{eq:range}    recasts    --    through    the
transformation  $Q^+$  --   the  set  of  distributions
compatible  with  $\pi$  as an  ellipsoid  centered  on
distribution $t$.  This in particular provides a simple
and clear  geometrical representation for the  range of
any qubit measurement.

As an  application of  our general results,  we further
simplify Eq.~(\ref{eq:range}) for some relevant classes
of qubit measurements,  i.e.  symmetric informationally
complete  (SIC) measurements~\cite{Zau99}  and mutually
unbiased bases~\cite{KR05} (MUB), both  in the real and
complex cases, and in  the presence of isotropic noise.
SIC  measurements play  a fundamental  role in  quantum
tomography~\cite{PCZ97,     CN97,    DL00},     quantum
communication~\cite{DDS11,  SS14, DBO14,  Szy14, Dal14,
  Dal15,   BDS16},    and   foundations    of   quantum
theory~\cite{FS03,  FS09,  FS11, AEF11,  Fuc12},  while
MUBs     are     pivotal    elements     in     quantum
cryptography~\cite{BB84},      entropic     uncertainty
relations~\cite{WW10,  BR11,  BHOW13}, and  locking  of
classical information in quantum states~\cite{DHLST04}.

Our  results  represent  a  further  step  towards  characterizing time-like correlations compatible with quantum theory through device-independent (DI) tests ~\cite{DBB16}.  The aim of DI tests is
that  of  falsifying  hypotheses about  the  underlying
physical  system, which  is  considered accessible  only
through     the     \textit{classical}     input-output
correlations it generates. In particular, no assumptions are made about the underlying physical system or how such correlations have been generated.    This
approach has been considered in previous literature for
the  problems   of  falsifying  hypotheses   about  the
dimension~\cite{GBHA10,  HGMBAT12,  ABCB11, DPGA12}  or
the average entropy~\cite{CBB15} of the input ensemble.

However,   while  here   we  provide   a  \textit{full}
characterization of  $\pds{}{\pi}$ (in  particular, for
the   qubit   case,    the   ellipsoid   described   in
Eq.~(\ref{eq:range})  can   be  \textit{plotted}),  the
application of previous results~\cite{GBHA10, HGMBAT12,
  ABCB11,   DPGA12,CBB15}   would    allow  one  to   probe
$\pds{}{\pi}$ along a fixed  radial direction only.  In
this sense, our results  provide {\em optimal} DI tests
of  quantum  measurements,   aimed  at  falsifying  the
hypothesis that an  observed input-output correlation $
p_{y|x} $  is generated by a  given quantum measurement
$\{ \pi_y \}$.

{\em Characterization  of $\pds{}{\pi}$.}  ---  We make
use   of  standard   results  in   quantum  information
theory~\cite{NC00}.  Any quantum  state $\rho$  is most
generally  described  by  a density  matrix,  namely  a
positive semidefinite unit-trace operator.  Any quantum
measurement $\pi$ is most generally described by a {\em
  positive operator-valued  measure} (POVM) $\pi  := \{
\pi_y  \}$,  namely  a  set  of  positive  semidefinite
operators such that $\sum_y \pi_y = \openone$.

For any  POVM $\pi  := \{ \pi_y  \}_{y=0}^{n-1}$, where
$n$   denotes  the   number   of   elements,  the   set
$\pds{}{\pi}$  of compatible  input-output correlations
is    formally     defined    as     $\pds{}{\pi}    :=
\bigcup_{m=1}^\infty        \pds{m}{\pi}$,        where
$\pds{m}{\pi}$   denotes   the    set   of   compatible
conditional probability distributions  $p := \{ p_{y|x}
\}$, upon  the input of  any set of $m$  unknown states
$\{ \rho_x \}$, that is
\begin{align*}
  \pds{m}{\pi} := \{ p \;  | \exists \; \{ \rho_x \}_{x
    =  0}^{m-1} \textrm{  s.t. }  p_{y|x} =  \Tr[\rho_x
  \pi_y ]\}.
\end{align*}

First,   we   notice   that,   for   any   fixed   $m$,
$\mathcal{S}_{m}(\pi)$ is  convex: indeed, for  any two
sets $\{ \rho_x \}$ and $\{ \sigma_x \}$ of $m$ states,
the conditional probability distribution $\{ p_{y|x} :=
\lambda  \Tr[\sigma_x \pi_y]  + (1-\lambda)  \Tr[\rho_x
\pi_y]  = \Tr[  \left( \lambda  \sigma_x +  (1-\lambda)
  \rho_x \right) \pi_y]  \}$ belongs to $\pds{m}{\pi}$,
since  $\left\{ \lambda  \sigma_x +  (1-\lambda) \rho_x
\right\}$ is itself a set of $m$ states.

Therefore,   as  a   consequence   of  the   hyperplane
separation  theorem~\cite{BV04},  it   is  possible  to
detect  any conditional  probability $p$  lying outside
the  set $\pds{m}{\pi}$  through  the  violation of  an
inequality    involving    a   linear    function    of
$p$~\cite{Bus12}.     More     explicitly,    $p    \in
\pds{m}{\pi}$ if and only if
\begin{align}
  \label{eq:compatibility}
  \max_w \left[ \Tr[w^T p] - W(\pi, w) \right] \le 0,
\end{align}
where the maximum is over any real matrix $w$ (with the
same  dimensions  as  $p$),   referred  to  as  a  {\em
  witness}, and $W(\pi, w)$  is defined as $\max_{q \in
  \pds{m}{\pi}}  \Tr[w^T q]$  and is  referred to  as a
{\em      witness     threshold}.       Notice     that
Eq.~\eqref{eq:compatibility}    corresponds    to    an
unconstrained maximin optimization problem.

As a preliminary remark,  let us discuss two properties
of Eq.~\eqref{eq:compatibility}  that will  be relevant
in  the  following. Since  $W(\pi,  w)$  is a  positive
homogeneous function,  i.e. $W(\pi, \alpha w)  = \alpha
W(\pi, w)$ for any $\alpha  \ge 0$, the {\em rescaling}
transformation $w  \to \alpha w$  for any $\alpha  > 0$
leaves      Eq.~\eqref{eq:compatibility}     invariant.
Moreover,  by   direct  computation  it   follows  that
Eq.~\eqref{eq:compatibility}  is  invariant  under  the
{\em  shifting}   transformation  $w  \to   w'$,  where
$w'_{x,y} := w_{x,y} + k_x$, for any $\vec{k}$.

Let us first solve the  optimization appearing in the definition
of $W(\pi, w)$. One has
\begin{align*}
  W(\pi, w) :=  & \sup_{\{ \rho_x \}}  \sum_x \Tr \left[\rho_x
    \left( \sum_y w_{x,y} \pi_y \right)  \right] \\ \le & \sum_x
  \sup_{\{  \rho_x \}}  \Tr \left[\rho_x  \left( \sum_y  w_{x,y}
      \pi_y \right) \right]
\end{align*}
where the  inequality is saturated  if and  only if $\{  \rho_x :=
\ket{\phi_x} \!\!  \bra{\phi_x}\}$, where $\{ \ket{\phi_x} \}$ are
the  eigenvectors  corresponding  to  the  largest  eigenvalue  of
$\sum_y w_{x,y}  \pi_y$.  In this  case one has  $\Tr \left[\rho_x
  \left( \sum_y w_{x,y} \pi_y  \right) \right] = \eig{\max}{\sum_y
  w_{x,y} \pi_y }$, where  $\eig{\max}{\cdot}$ denotes the largest
eigenvalue  of $(\cdot)$.   Notice that  $\eig{\max}{\cdot}$ is  a
convex function~\cite{BV04}.   Then our first  preliminary result immediately follows.

\begin{prop}[Witness threshold]
  \label{prop:threshold}
  For any POVM $\pi$ and  any witness $w$, the witness threshold
  $W(\pi, w)$ is given by
  \begin{align}
    \label{eq:threshold}
    W(\pi, w) = \sum_x \eig{\max}{\sum_y w_{x,y} \pi_y}.
  \end{align}
\end{prop}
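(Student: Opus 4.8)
The plan is to start from the definition of the witness threshold, $W(\pi,w) = \max_{q\in\pds{m}{\pi}}\Tr[w^T q]$, and to unfold the membership condition $q\in\pds{m}{\pi}$ explicitly. By definition of $\pds{m}{\pi}$ there exists a family of density matrices $\{\rho_x\}_{x=0}^{m-1}$ with $q_{y|x}=\Tr[\rho_x\pi_y]$, so that
\begin{align*}
  \Tr[w^T q] = \sum_{x,y} w_{x,y}\,\Tr[\rho_x\pi_y]
  = \sum_x \Tr\!\left[\rho_x\left(\sum_y w_{x,y}\pi_y\right)\right],
\end{align*}
and hence $W(\pi,w)=\sup_{\{\rho_x\}}\sum_x \Tr[\rho_x A_x]$ with $A_x:=\sum_y w_{x,y}\pi_y$. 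This is exactly the first displayed equation preceding the proposition, so the remaining work is to evaluate this supremum.

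First I would observe that the optimisation decouples over $x$: each $\rho_x$ ranges over the full set of density matrices, with no constraint coupling it to $\rho_{x'}$ for $x'\neq x$, so the supremum of the sum equals the sum of the suprema, $W(\pi,w)=\sum_x\sup_\rho\Tr[\rho A_x]$. Next, since $w$ is a \emph{real} matrix and each $\pi_y$ is Hermitian, $A_x$ is Hermitian; by the Rayleigh--Ritz variational principle $\sup_\rho\Tr[\rho A_x]=\eig{\max}{A_x}$, the supremum being attained — so that it is in fact a maximum, consistent with compactness of the set of density matrices — by any $\rho=\ket{\phi}\!\bra{\phi}$ with $\ket{\phi}$ an eigenvector for the largest eigenvalue of $A_x$. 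Substituting back gives Eq.~\eqref{eq:threshold}.

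There is no genuine obstacle here; the only points that deserve care are (i) justifying the interchange of supremum and sum, which rests on the product structure of the feasible set of input states — precisely the feature, noted earlier, that distinguishes testing a measurement from testing a channel — and (ii) noting that the reality of $w$ is what guarantees $A_x$ is Hermitian, so that $\eig{\max}{\cdot}$ is well defined and the optimal input is a genuine (pure) quantum state. Convexity of $q\mapsto\Tr[w^T q]$ over the convex set $\pds{m}{\pi}$ is a consistency check that the maximum is attained at an extreme point, in agreement with the optimal $\rho_x$ being rank one.
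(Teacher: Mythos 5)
Your proposal is correct and follows essentially the same route as the paper: rewrite $W(\pi,w)$ as $\sup_{\{\rho_x\}}\sum_x\Tr[\rho_x(\sum_y w_{x,y}\pi_y)]$, interchange the supremum with the sum (the paper states this as an inequality saturated by choosing each $\rho_x$ to be the top eigenvector projector), and identify each single-state supremum with $\eig{\max}{\sum_y w_{x,y}\pi_y}$. Your added remarks on the product structure of the feasible set and on the Hermiticity of $\sum_y w_{x,y}\pi_y$ merely make explicit what the paper leaves implicit.
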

Proposition~\ref{prop:threshold}  recasts  the  optimization  in
Eq.~\eqref{eq:compatibility}   as   an   unconstrained   concave
maximization problem over any witness $w$.

The transformation $w \to w'$, where $w'_{x,y} := \mu_x w_{x,y}$
with $\mu$ a probability distribution,  i.e. $\N{\mu}_1 = 1$ and
$\mu \ge 0$ (here and  in the following $\N{v}_p := \left(\sum_k
  v_k^p\right)^{1/p}$ denotes the $p$-norm  of vector $v$), maps
Eq.~\eqref{eq:compatibility} into
\begin{align*}
  \max_{\substack{\mu \ge 0 \\ \N{\mu}_1 = 1}} \max_w \sum_x
  \mu_x \left[  \sum_y p_{y|x} w_{x,y} -  \eig{\max}{\sum_y w_{x,y}
      \pi_y } \right] \le 0,
\end{align*}
where  the maximum  over probability  distributions $\mu$  is of
course attained when $\mu_x = \delta_{x,x^*}$, with
\begin{align*}
  x^*  = \arg  \max_x  \max_w \left[  \sum_y  p_{y|x} w_{x,y}  -
    \eig{\max}{\sum_y w_{x,y} \pi_y } \right].
\end{align*}
To  summarize, the  above  calculation  shows that  the
optimization   of  the   witness   $w$   can  be   done
independently  for each  $x$.  We  therefore obtain our second preliminary result.  

\begin{prop}
  \label{prop:meastest}
  For  any  given   POVM  $\pi  :=  \{   \pi_y  \}$,  a
  conditional   probability    distribution   $p_{y|x}$
  belongs  to $\pds{}{\pi}$  if  and only  if, for  any
  fixed $x  = \bar{x}$, $q_y  := \{ p_{y|\bar{x}}  \} $
  belongs to $\pds{1}{\pi}$.
\end{prop}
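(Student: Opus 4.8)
\emph{Proof sketch.---} The plan is to obtain the equivalence from the dual (witness) description of $\pds{m}{\pi}$ developed above; it also drops out directly from the primal definition, which I would present first as the conceptually cleanest route. The preliminary observation is a matching-of-dimensions remark: a conditional distribution $p = \{ p_{y|x} \}$ comes with a fixed number $m$ of inputs, so $p \in \pds{}{\pi}$ holds precisely when $p \in \pds{m}{\pi}$, since every other set in the union $\pds{}{\pi} = \bigcup_{m'} \pds{m'}{\pi}$ consists of distributions with a different number of columns. At the primal level the ``only if'' direction is then immediate: if $p_{y|x} = \Tr[\rho_x \pi_y]$ for some family $\{ \rho_x \}_{x=0}^{m-1}$, then for each fixed $\bar x$ the single state $\rho = \rho_{\bar x}$ realizes $q_y = p_{y|\bar x} = \Tr[\rho\, \pi_y]$, so $q \in \pds{1}{\pi}$. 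Conversely, if for every $\bar x$ some state $\rho_{\bar x}$ satisfies $p_{y|\bar x} = \Tr[\rho_{\bar x}\, \pi_y]$, the collection $\{ \rho_x \}_{x=0}^{m-1}$ witnesses $p \in \pds{m}{\pi} \subseteq \pds{}{\pi}$.

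To make explicit instead the role of Proposition~\ref{prop:threshold} and of the reweighting discussed above, I would route the argument through the separating hyperplane. By Eq.~\eqref{eq:compatibility} and Eq.~\eqref{eq:threshold}, $p \in \pds{m}{\pi}$ if and only if
\begin{align*}
  \max_w \sum_x \left[ \sum_y p_{y|x} w_{x,y} - \eig{\max}{\sum_y w_{x,y} \pi_y} \right] \le 0 .
\end{align*}
Each summand depends only on the $x$-th row $w_{x,\cdot}$ of the witness, so the maximization factorizes over $x$; writing $g_x$ for the resulting per-input supremum, the condition reads $\sum_x g_x \le 0$, equivalently $\max_x g_x \le 0$ after the substitution $w_{x,y} \to \mu_x w_{x,y}$ with $\mu$ a probability distribution, exactly as above. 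The step that makes everything collapse is the observation that $g_x \ge 0$ for every $x$, because the choice $w_{x,\cdot} = 0$ already attains the value $0$; hence $\sum_x g_x \le 0$ forces $g_x = 0$ for every $x$. Finally, $g_x \le 0$ is precisely Eq.~\eqref{eq:compatibility} written for a single input carrying the data $q_y = p_{y|x}$, i.e.\ the statement $q \in \pds{1}{\pi}$, and the claimed equivalence follows.

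I expect the only non-routine point to be this last one: recognizing that the dual optimum splits across inputs and that each piece is nonnegative, so that joint membership in $\pds{}{\pi}$ collapses to the conjunction of the single-input conditions $\{ p_{y|\bar x} \} \in \pds{1}{\pi}$. Everything else is bookkeeping with the definitions and with the two preceding propositions; and, as the first paragraph shows, at the purely primal level no optimization is needed at all.
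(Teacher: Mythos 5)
Your dual argument is essentially the paper's own derivation: the paper reaches Proposition~\ref{prop:meastest} by inserting the reweighting $w_{x,y}\to\mu_x w_{x,y}$ into Eq.~\eqref{eq:compatibility} and noting that the optimum over probability distributions $\mu$ concentrates on a single input, so that the witness optimization decouples across $x$. Your observation that each per-input supremum $g_x$ is nonnegative (because $w_{x,\cdot}=0$ is feasible) is in fact a cleaner justification of the collapse than the paper gives: without it, $\sum_x g_x\le 0$ would not by itself be equivalent to $\max_x g_x\le 0$, so you have made explicit a step the paper leaves implicit. Your primal argument is a genuinely different and more elementary route that bypasses the witness machinery entirely; the paper does not use it as the proof of the proposition, but it states exactly this observation in its introduction (``a correlation $p$ is compatible with measurement $\pi$ if and only if for any fixed $x$ there exists a state $\rho_x$ such that $p_{y|x}=\Tr[\rho_x\pi_y]$''), so the two are consistent. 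Both of your arguments are correct: the primal one buys simplicity and makes clear that no convexity or duality is needed for this step, while the dual one shows how the result sits inside the witness formalism that the paper then exploits for the qubit case.
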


Due to Proposition~\ref{prop:meastest}, without loss of
generality  we   solve  the  optimization   problem  in
Eq.~\eqref{eq:compatibility} when  $m = 1$.   Hence, in
the following we will  consider output distributions $q
= \{ q_y \}$  rather than input-output distributions $p
= \{ p_{y|x} \}$.

{\em Range of qubit  measurements} --- In what follows,
we restrict  our attention  to the  case of  qubit POVM
$\pi$.    Let  $t   \in   \mathbb{R}^n$   and  $S   \in
\mathbb{R}^{n  \times   3}$  be  defined  by   $t_y  :=
\Tr[\pi_y]/2$ and  $S_{y,j} :=  \Tr[\pi_y \sigma_j]/2$,
where $\{  \sigma_j \}_{j=1}^3$ are the  Pauli matrices
$\{\sigma_1\equiv   X,\sigma_2\equiv   Y,\sigma_3\equiv
Z\}$  for some  fixed computational  basis, so  one has
$\pi_y = t_y \openone + \sum_{k=1}^3 S_{y,k} \sigma_k$.
Naturally,  our  result  will  be  independent  of  the
particular  choice of  the  computational basis.   With
this   parametrization,   the  witness   threshold   in
Eq.~\eqref{eq:threshold} becomes
\begin{align*}
  W(\pi, w) = t^T w + \N{S^T w}_2.
\end{align*}
Accordingly, Eq.~\eqref{eq:compatibility} becomes
\begin{align}
  \label{eq:qubit-compatibility-00}
  \max_w \left[ (q - t)^T w - \N{S^T w}_2 \right] \le 0.
\end{align}
For any $w$  such that $(p-t)^T w \neq  0$, let $\alpha
:= |(p-t)^T w|$ and $w'  := \alpha^{-1} w$, and let $w'
= w$  otherwise.  The transformation $w  \to w'$ leaves
Eq.~\eqref{eq:compatibility}       invariant.        So
Eq.~\eqref{eq:qubit-compatibility-00} becomes
\begin{align}
  \label{eq:qubit-compatibility-01}
  \max_{\substack{w \\ (q-t)^T w = \pm 1, 0}} \left[ (q
    - t)^T w - \N{S^T w}_2 \right] \le 0.
\end{align}
If   $(q  -   t)^T   w   =  -1,   0$,   one  has   that
Eq.~\eqref{eq:qubit-compatibility-01}    is   trivially
satisfied. Thus, we focus in  the following on the case
$(q      -       t)^T      w      =       1$,      when
Eq.~\eqref{eq:qubit-compatibility-01} becomes
\begin{align}
  \label{eq:qubit-compatibility-02}
  \min_{\substack{w  \\ (q-t)^T w  = 1}}  \N{S^T
    w}^2_2 \ge 1.
\end{align}

The                   optimization                   in
Eq.~\eqref{eq:qubit-compatibility-02}       is       an
equality-constrained  quadratic problem.   Its solution
leads to our second main result.

\begin{thm}[Range of qubit measurements]
  \label{thm:qubit-compatibility}
  An output  distribution $q :=  \{ q_y \}$  belongs to
  the   range  $\pds{1}{\pi}$   of   any  given   qubit
  measurement $\pi := \{ \pi_y \}$ if and only if
  \begin{align}
    \label{eq:qubit-compatibility}
    \begin{cases}
      (\openone - Q^+ Q) (q - t) = 0,\\
      (q - t)^T Q^+ (q - t) \le 1,
    \end{cases}
  \end{align}
  where $\openone$  is the identity matrix,  $t$ is the
  vector $t_y := \frac12 \Tr[\pi_y]$, $Q$ is the matrix
  $Q_{y_0, y_1}  := \frac12 \Tr[\pi_{y_0}  \pi_{y_1}] -
  \frac14    \Tr[\pi_{y_0}]     \Tr[\pi_{y_1}]$,    and
  $(\cdot)^+$     represents      the     Moore-Penrose
  pseudoinverse.
\end{thm}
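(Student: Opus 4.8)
The plan is to solve the equality-constrained quadratic program obtained in Eq.~\eqref{eq:qubit-compatibility-02} in closed form and read off Eq.~\eqref{eq:qubit-compatibility}. The first step is the algebraic identity $Q = S S^T$: substituting $\pi_y = t_y\openone + \sum_k S_{y,k}\sigma_k$ and using $\Tr[\sigma_j\sigma_k] = 2\delta_{jk}$ gives $\tfrac12\Tr[\pi_{y_0}\pi_{y_1}] = t_{y_0}t_{y_1} + (SS^T)_{y_0,y_1}$ and $\tfrac14\Tr[\pi_{y_0}]\Tr[\pi_{y_1}] = t_{y_0}t_{y_1}$, so that $\N{S^T w}_2^2 = w^T S S^T w = w^T Q w$ with $Q\succeq 0$. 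Writing $a := q - t$, Eq.~\eqref{eq:qubit-compatibility-02} becomes $\min\{\,w^T Q w : a^T w = 1\,\}\ge 1$, with the convention that the minimum over an empty feasible set is $+\infty$. Everything then reduces to evaluating this minimum as a function of $a$ and $Q$.

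Next I would split into cases according to whether $a\in\mathrm{range}(Q)$, i.e.\ whether $(\openone - Q^+Q)a = 0$. If $(\openone - Q^+Q)a\neq 0$, then $w_0 := (\openone - Q^+Q)a$ is a nonzero element of $\ker Q$ with $a^T w_0 = \N{w_0}_2^2 > 0$; rescaling $w_0$ to satisfy the constraint exhibits a feasible $w$ with $w^T Q w = 0 < 1$, so Eq.~\eqref{eq:qubit-compatibility-02} fails and $q\notin\pds{1}{\pi}$, in agreement with the first line of Eq.~\eqref{eq:qubit-compatibility} being violated. If instead $(\openone - Q^+Q)a = 0$ but $a = 0$ (the case $q = t$, realized by the maximally mixed state), the feasible set is empty, the minimum is $+\infty\ge 1$, and both lines of Eq.~\eqref{eq:qubit-compatibility} hold trivially.

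The remaining, generic case is $a\in\mathrm{range}(Q)$ with $a\neq 0$. Here I would decompose any feasible $w$ as $w = w_\parallel + w_\perp$ with $w_\parallel\in\mathrm{range}(Q)$ and $w_\perp\in\ker Q$; then $w^T Q w = w_\parallel^T Q w_\parallel$ and, since $a\perp w_\perp$, also $a^T w = a^T w_\parallel$, so the problem restricts to $\mathrm{range}(Q)$, on which $Q$ is positive definite with inverse $Q^+$. A Cauchy--Schwarz estimate in the $Q$-inner product, $1 = (a^T w_\parallel)^2 = \big\langle (Q^{+})^{1/2}a,\, Q^{1/2}w_\parallel\big\rangle^2 \le (a^T Q^+ a)\,(w_\parallel^T Q w_\parallel)$, yields the bound $w^T Q w\ge (a^T Q^+ a)^{-1}$, attained at $w^\star := Q^+a/(a^T Q^+a)$ by the Penrose identity $Q^+QQ^+ = Q^+$ (and $a^T Q^+ a>0$ because $a$ is a nonzero element of $\mathrm{range}(Q) = \mathrm{range}(Q^+)$). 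Hence the minimum equals $(a^T Q^+ a)^{-1}$, so Eq.~\eqref{eq:qubit-compatibility-02} is equivalent to $a^T Q^+a\le 1$, which together with the case hypothesis $(\openone - Q^+Q)a = 0$ is precisely Eq.~\eqref{eq:qubit-compatibility}.

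I expect the main obstacle to be the careful bookkeeping for rank-deficient $Q$, which is exactly what happens for overcomplete POVMs: one must justify the pseudoinverse identities and the orthogonal splitting $\mathbb{R}^n = \mathrm{range}(Q)\oplus\ker Q$, and handle the two degenerate subcases ($a\notin\mathrm{range}(Q)$, giving minimum $0$; $a = 0$, giving an empty feasible set) so that the single formula in Eq.~\eqref{eq:qubit-compatibility} covers all cases uniformly. As an independent consistency check, the statement can also be derived from the Bloch parametrization $\rho = \tfrac12(\openone + \vec r\cdot\vec\sigma)$, under which $\pds{1}{\pi} = \{\,t + S\vec r : \N{\vec r}_2\le 1\,\}$ is the image of the unit ball and $S^T(SS^T)^+S$ is the orthogonal projector onto $\mathrm{range}(S^T)$; this picture also shows that normalization and positivity of $q$ come out automatically from Eq.~\eqref{eq:qubit-compatibility}.
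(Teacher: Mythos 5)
Your proposal is correct and follows the same overall strategy as the paper: reduce to the equality-constrained quadratic program $\min\{w^TQw : (q-t)^Tw=1\}\ge 1$ of Eq.~\eqref{eq:qubit-compatibility-02}, establish the key identity $Q=SS^T$, and split into cases according to whether $q-t$ lies in the range of $Q$. Where you genuinely differ is in how the quadratic program is solved. The paper invokes the general Lagrange-multiplier solution of equality-constrained quadratic programs and then disentangles four cases according to the signs of $(q-t)^TQ^+(q-t)$ and $(q-t)^T(\openone-Q^+Q)(q-t)$; you instead restrict to $\mathrm{range}(Q)$ by orthogonal decomposition and obtain the minimum $\left[(q-t)^TQ^+(q-t)\right]^{-1}$ from Cauchy--Schwarz in the $Q$-inner product, with explicit attainment at $w^\star=Q^+(q-t)/\left[(q-t)^TQ^+(q-t)\right]$. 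Your route is more self-contained (no appeal to the cited KKT formula), merges the paper's first two cases into the single observation that any nonzero $w_0=(\openone-Q^+Q)(q-t)$ is, after rescaling, a feasible witness of zero cost, and disposes of $q=t$ via the empty-feasible-set convention rather than the paper's fourth case. What the paper's version buys is the explicit optimal witness in every case, which is what one would deploy in an actual device-independent test; what yours buys is a shorter, verifiable derivation of the minimum, and incidentally it corrects what appears to be a typo in the paper's third case, where the exponent on $\left[(q-t)^TQ^+(q-t)\right]$ should be $-1$ rather than $-\tfrac12$ (the conclusion is unaffected). Your closing Bloch-ball remark, that $\pds{1}{\pi}$ is the image of the unit ball under $\vec r\mapsto t+S\vec r$, is a valid sanity check consistent with the paper's geometric discussion.
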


Before           proceeding          to           prove
Theorem~\ref{thm:qubit-compatibility},  let us  discuss
the  role  of  matrix  $Q$ and  provide  a  geometrical
interpretation.

Remarkably,  matrix $Q$  has the  form of  a covariance
matrix and  quantifies the statistical overlap  of POVM
elements.        Indeed,       as       proved       in
Proposition~\ref{prop:corrmatrix}  in the  Supplemental
Material, element $Q_{y_0,y_1}$  is minimized and equal
to $-1/4$ if $\pi_{y_0}$  and $\pi_{y_1}$ are rank-one,
unit-trace,   orthogonal    effects.    Also,   element
$Q_{y_0,y_1}$ is maximized and equal to $1/4$ if $y_0 =
y_1$   and  $\pi_{y_0}$   is  a   rank-one,  unit-trace
effect.   Finally,  element   $Q_{y_0,y_1}   =  0$   if
$\pi_{y_0}$  and $\pi_{y_1}$  are both  proportional to
the identity operator.

Let  us first  focus  on the  system  of equalities  in
Eq.~\eqref{eq:qubit-compatibility}.   Denoting  by  $l$
the maximum number of  linearly independent elements in
$\{\pi_y \}$, the number of  equations in the system is
$n - l + 1$,  each identifying an $(n - 1)$-dimensional
hyperplane.   This   comes  from  the  fact   that,  by
definition, one  has $\rank Q^+Q  = l - 1$,  which also
implies  $\rank(\openone  -  Q^+  Q)  = n  -  l  +  1$.
Moreover,   when  all   POVM   elements  are   linearly
independent,  namely $n=l$,  the only  equation in  the
system  is  $\N{q}_1  = \N{t}_1=1$.   This  follows  by
explicit computation: in this case, $(\openone - Q^+Q)$
turns out to coincide with the rank-one projector along
the vector  with all unit entries.   Hence, in general,
the           system            of           equalities
in~(\ref{eq:qubit-compatibility})   represents   linear
dependencies among POVM elements $\{ \pi_y \}$.

Let    us   now    focus   on    the   inequality    in
Eq.~\eqref{eq:qubit-compatibility}, which represents an
$n$-dimensional degenerate  (hyper-) ellipsoid centered
on  probability $t$.   More  precisely, the  inequality
represents the Cartesian  product of $\mathbb{R}^{n-3}$
with     a      three-dimensional     ellipsoid,     or
$\mathbb{R}^{n-2}$ with  a two-dimensional  ellipse, or
$\mathbb{R}^{n-1}$  with   a  one-dimensional  segment,
depending  on  whether $l  =  4,  3, 2$,  respectively.
Accordingly,          the          solution          of
Eq.~\eqref{eq:qubit-compatibility} is  an ellipsoid, an
ellipse,  or  a   segment,  respectively,  embedded  in
$\mathbb{R}^n$.

We      now     turn      to      the     proof      of
Theorem~\ref{thm:qubit-compatibility}.

\begin{proof}
  By explicit computation,  it immediately follows that
  matrix $Q$ defined in the statement can be written in
  terms         of         matrix        $S$         in
  Eq.~\eqref{eq:qubit-compatibility-02}   as  $Q   =  S
  S^T$. Indeed,  using the  decomposition $\pi_y  = t_y
  \openone   +  \sum_{k=1}^3   S_{y,k}  \sigma_k$   one
  immediately has  $\frac12 \Tr[\pi_{y_0}  \pi_{y_1}] =
  t_{y_0} t_{y_1} + \sum_{k=1}^3 S_{y_0,k} S_{y_1, k}$.

  Equality-constrained quadratic problems can be solved
  explicitly~\cite{BV04,   AG03}.   In   the  case   of
  Eq.~\eqref{eq:qubit-compatibility-02} we thus have
  \begin{align}
    \label{eq:quadratic-programming}
    \begin{cases}
      w = -\lambda Q^+ (q - t) + (\openone - Q^+Q) v,\\
      \lambda  (q-t)^T Q^+  (q-t) =  (q-t) (\openone  -
      Q^+Q) v - 1,
    \end{cases}
  \end{align}
  where $\lambda$  is a Lagrange multiplier  and $v$ is
  an arbitrary vector.

  Notice that $(q-t)^T Q^+ (q-t)  \ge 0$ since $Q^+ \ge
  0$ and that  $(q-t)^T (\openone - Q^+Q)  (q-t) \ge 0$
  since $(\openone - Q^+Q)$ is a projector.  We need to
  distinguish          four           cases          in
  Eqs.~\eqref{eq:quadratic-programming}.

  {\em First  case}. Let  $(q-t)^T Q^+  (q-t) >  0$ and
  $(q-t)^T (\openone - Q^+Q) (q-t) > 0$. Upon taking
  \begin{align*}
    v   =   \frac{(\openone  -   Q^+Q)   (q-t)}{(q-t)^T
      (\openone - Q^+Q) (q-t)},
  \end{align*}
  one has $\lambda = 0$ and $w = v$.  Therefore $\N{S^T
    w}_2^2 = 0$, namely probability $q$ is incompatible
  with POVM $\pi$.

  {\em Second  case}. Let $(q-t)^T  Q^+ (q-t) =  0$ and
  $(q-t)^T (\openone  - Q^+Q) (q-t) >  0$.  Upon taking
  $v$  again  as  above,  one  has  that  $\lambda$  is
  undetermined and $w = v$.  Therefore $\N{S^T w}_2^2 =
  0$, namely probability $q$  is incompatible with POVM
  $\pi$.

  {\em Third  case}. Let  $(q-t)^T Q^+  (q-t) >  0$ and
  $(q-t)^T (\openone - Q^+Q) (q-t) = 0$. Upon taking $v
  = 0$ one has
  \begin{align*}
    \lambda = - \frac{1}{(q-t)^T Q^+ (q-t)}
  \end{align*}
  and
  \begin{align*}
    w = \frac{Q^+(q-t)}{(q-t)^TQ^+(q-t)}.
  \end{align*}
  Therefore one has
  \begin{align*}
    \N{S^T    w}_2^2     =    \left[    (q-t)^TQ^+(q-t)
    \right]^{-\frac12},
  \end{align*}
  namely probability $q$ is  compatible with POVM $\pi$
  only   if   Eqs.~\eqref{eq:qubit-compatibility}   are
  satisfied.
  
  {\em Fourth  case}. Let $(q-t)^T  Q^+ (q-t) =  0$ and
  $(q-t)^T  (\openone -  Q^+Q) (q-t)  = 0$.   Condition
  $(q-t)^T  Q^+ (q-t)  = 0$  implies $\left|  S^+ (q-t)
  \right|_2 = 0$  and thus $S^+ (q-t) = 0$  and thus $Q
  Q^+ (q-t) =  0$ and thus $Q^+ Q (q-t)  = 0$.  For the
  final  implication we  used  the fact  that from  the
  definition  of  Moore-Penrose pseudoinverse  and  the
  symmetry of $Q$ it follows that  $Q^+ Q = (Q^+ Q)^T =
  Q^T (Q^T)^+ = Q  Q^+$. Condition $(q-t)^T (\openone -
  Q^+Q) (q-t) =  0$ implies $(\openone -  Q^+Q) (q-t) =
  0$ since $\openone - Q^+Q$ is a projector.  Therefore
  altogether they imply the following system
  \begin{align*}
    \begin{cases}
      Q^+Q (q-t) = 0,\\
      (\openone - Q^+Q) (q-t) = 0.
    \end{cases}
  \end{align*}
  This system  in turn  implies $q  = t$  and therefore
  probability $q$  is compatible with POVM  $\pi$ (upon
  input of $\openone/d$).

  Therefore $(q-t)^T  (\openone -  Q^+Q) (q-t) =  0$ is
  necessary for  probability $p$ to be  compatible with
  POVM $\pi$.   Notice also  that $(q-t)^T  (\openone -
  Q^+Q) (q-t)  = 0$ if  and only if $(\openone  - Q^+Q)
  (q-t) =  0$. Therefore probability $p$  is compatible
  with     POVM     $\pi$     if    and     only     if
  Eqs.~\eqref{eq:qubit-compatibility} are satisfied.
\end{proof}

{\em  Applications}   ---  We  have  provided   a  full
characterization   of   $\pds{}{\pi}$   in   terms   of
$\pds{1}{\pi}$     for     {\em    any}     POVM     in
Proposition~\ref{prop:meastest}, and a closed-form full
characterization of $\pds{1}{\pi}$  for {\em any qubit}
POVM  in  Theorem~\ref{thm:qubit-compatibility}. As  an
application, let us now  specify our general results to
the                 depolarized                 version
$\mathcal{D}_\lambda^\dagger(\pi)$  of  any  qubit  SIC
POVM or MUB $\pi := \{ \pi_y \}$.

We   first  recall   that   the  depolarizing   channel
$\mathcal{D}_\lambda$,  modelling  isotropic noise,  is
defined as $\mathcal{D}_\lambda : \rho \to \lambda \rho
+  (1 -  \lambda)  \Tr[\rho] d^{-1}  \openone$ for  any
state $\rho$, and $\mathcal{D}_\lambda^\dagger$ denotes
channel $\mathcal{D}$  in the Heisenberg  picture, i.e.
$\Tr[\mathcal{D}_\lambda(\rho) \; \pi_y]  = \Tr[\rho \;
\mathcal{D}_\lambda^\dagger(\pi_y)]$   for  any   state
$\rho$ and any effect $\pi_y$.

An informationally complete rank-one POVM $\{ \pi_y \}$
such  that  $\braket{\pi_y  |  \pi_y}  =  N_d$  and  $|
\braket{\pi_y |  \pi_{y' \neq  y}} |^2 =  N_d^2 C^2_d$,
for  some  $N_d$ and  $C_d$  that  depend only  on  the
dimension  $d$,  is called a  symmetric,  informationally
complete  (SIC).  By  trivial  computation, it  follows
that    $N_d     =    2(d+1)^{-1}$    and     $C_d    =
(d-1)(d^2+d-2)^{-1}$  for real  SIC POVMs,  and $N_d  =
d^{-1}$ and  $C_d = (d+1)^{-1}$ for  complex SIC POVMs.
In the qubit case, the  only real and complex SIC POVMs
are, up to unitaries  and anti-unitaries, the trine and
tetrahedral POVMs, respectively.

Then,    the    following     result    follows    from
Theorem~\ref{thm:qubit-compatibility}, as  shown in the
Supplemental Material~\cite{supmat}.
\begin{cor}
  An output  probability distribution  $q = \{  q_y \}$
  belongs             to            the             set
  $\pds{1}{\mathcal{D}^\dagger_\lambda(\pi)}$   if  and
  only if
  \begin{align*}
    \N{q}_2^2 \le \frac{\lambda^2 + 2}6,
  \end{align*}
  if $\pi$ is a real SIC, and
  \begin{align*}
    \N{q}_2^2 \le \frac{\lambda^2 + 3}{12},
  \end{align*}
  if $\pi$ is a complex SIC.
\end{cor}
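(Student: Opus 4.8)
The plan is to apply Theorem~\ref{thm:qubit-compatibility} to the POVM $\pi' := \mathcal{D}_\lambda^\dagger(\pi)$ after computing its data $t'$ and $Q'$ in closed form. First I would write a qubit SIC POVM in Bloch form as $\pi_y = \tfrac1n(\openone + \vec s_y\cdot\vec\sigma)$, where $n$ is the number of outcomes ($n=3$ for the trine, i.e.\ the real SIC, and $n=4$ for the tetrahedron, i.e.\ the complex SIC); the symmetry and informational-completeness conditions, specialised to a qubit, say exactly that the Bloch vectors $\vec s_y$ are unit, satisfy $\sum_y \vec s_y = 0$, and have a common pairwise overlap $\vec s_{y_0}\cdot\vec s_{y_1} = c := -1/(n-1)$ for $y_0\neq y_1$. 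In this parametrisation $t_y = 1/n$ and $S_{y,j} = s_{y,j}/n$, so the data entering Eq.~\eqref{eq:qubit-compatibility} are fixed purely by the simplex geometry.

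Next I would use that $\mathcal{D}_\lambda^\dagger$ acts as $\pi_y \mapsto \lambda\pi_y + (1-\lambda)\tfrac12\Tr[\pi_y]\openone$, which leaves the identity component $t_y$ untouched (trace is preserved) while rescaling the Bloch part by $\lambda$; hence $t' = t$ and, since $Q = SS^T$, $Q' = \lambda^2 Q$. Using the overlaps, $Q = \tfrac1{n^2}\big[(1-c)\openone + c\,J\big]$ with $J$ the all-ones $n\times n$ matrix, so its spectral decomposition is immediate: the all-ones vector $\mathbf 1$ is an eigenvector with eigenvalue $1+c(n-1) = 0$, and its orthogonal complement $\mathbf 1^\perp$ carries the eigenvalue $(1-c)/n^2 > 0$. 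Consequently $Q'^+ = \tfrac{n^2}{\lambda^2(1-c)}\big(\openone - \tfrac1n J\big)$ is a positive multiple of the projector onto $\mathbf 1^\perp$, and $\openone - Q'^+Q' = \tfrac1n J$ is the rank-one projector along $\mathbf 1$, in agreement with the general remark following Theorem~\ref{thm:qubit-compatibility} that for $l=n$ this term is the projector onto the all-ones vector.

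With these ingredients the two conditions of Theorem~\ref{thm:qubit-compatibility} unwind cleanly (I take $0 < \lambda \le 1$; the case $\lambda = 0$ is immediate, since then $\pi'$ is trivial and $q = t$ is forced, with $\N{q}_2^2 = 1/n$ saturating the bound). The equality $(\openone - Q'^+Q')(q-t) = 0$ reads $\tfrac1n J(q - t) = 0$, i.e.\ $\sum_y(q_y - t_y) = 0$, which holds automatically because $q$ is a probability distribution and $\sum_y t_y = 1$. In the inequality $(q-t)^T Q'^+(q-t) \le 1$ the vector $q - t$ already lies in $\mathbf 1^\perp$, so the projector inside $Q'^+$ acts as the identity and the inequality collapses to $\N{q - t}_2^2 \le \lambda^2(1-c)n^{-2}$. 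Finally, expanding $\N{q-t}_2^2 = \N{q}_2^2 - 2n^{-1} + n^{-1} = \N{q}_2^2 - n^{-1}$ (using $t_y = n^{-1}$ and $\sum_y q_y = 1$) and substituting $(n,c) = (3,-\tfrac12)$ for the real SIC and $(n,c) = (4,-\tfrac13)$ for the complex SIC yields $\N{q}_2^2 \le (\lambda^2 + 2)/6$ and $\N{q}_2^2 \le (\lambda^2 + 3)/12$ respectively.

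The only mildly delicate point is the bookkeeping with the Moore--Penrose pseudoinverse: one must verify that $\openone - Q^+Q$ is genuinely the rank-one projector $\tfrac1n J$ — equivalently, that the qubit SIC POVM elements are linearly independent, so that $l = n$ — and that the surviving inequality is a nondegenerate ellipsoid. Since $Q$ above is the explicit matrix $a\openone + bJ$, its rank and spectrum are a one-line computation, so this is housekeeping rather than a genuine obstacle, and everything else is routine algebra.
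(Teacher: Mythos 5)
Your proof is correct and takes essentially the same route as the paper's: apply Theorem~\ref{thm:qubit-compatibility} to the depolarised SIC by computing $t$ and $Q$ explicitly and simplifying the two conditions. The only difference is organisational --- you handle the trine and tetrahedron uniformly through the parameters $(n,c)$ and read off $Q^{+}$ from the spectral decomposition of $a\openone+bJ$, whereas the paper writes out the explicit $3\times3$ and $4\times4$ matrices and uses elementary identities such as $\sum_{y<z}q_yq_z=\tfrac12(1-\sum_yq_y^2)$; your bookkeeping also correctly exhibits the $\lambda^{2}$ scaling of $Q$ under depolarisation, which is consistent with the $\lambda^{-2}$ appearing in the paper's final inequality (the prefactor $\lambda$ in the supplemental material's displayed $Q$ appears to be a typo for $\lambda^{2}$).
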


An informationally complete rank-one POVM $\{ \pi_{z,t}
\}$ such that $\{ \ket{\pi_{z,t}} \}$ is an orthonormal
basis for  any $t$,  $\braket{\pi_{z,t} |  \pi_{z,t}} =
N_d$,  and $|  \braket{\pi_{z',t'} |  \pi_{z,t}} |^2  =
N_d^2 C^2_d$ for $t \neq  t'$, for some $N_d$ and $C_d$
that  depend  only  on  the dimension  $d$,  is  called
mutually unbiased basis (MUB).  By trivial computation,
it follows  that $N_d = \floor{d/2+1}^{-1}$  and $C_d =
d^{-1}$ for real MUBs, and  $N_d = (d+1)^{-1}$ and $C_d
= d^{-1}$  for complex  MUBs.  In  the qubit  case, the
only real  and complex  MUBs are,  up to  unitaries and
anti-unitaries,  the   square  and   octahedral  POVMs,
respectively.

Then,    the    following     result    follows    from
Theorem~\ref{thm:qubit-compatibility}, as  shown in the
Supplemental Material~\cite{supmat}.
\begin{cor}
  An output  probability distribution  $q = \{  q_y \}$
  belongs             to            the             set
  $\pds{1}{\mathcal{D}^\dagger_\lambda(\pi)}$   if  and
  only if
  \begin{align*}
    \begin{cases}
      q_{2y} +  q_{2y+1}  =  \frac12, & \quad y = 0, 1,\\
      \N{q}_2^2 \le \frac{\lambda^2 + 2}8, &
      \end{cases}
  \end{align*}
  if $\pi$ is a real MUB,and
  \begin{align*}
    \begin{cases}
      q_{2y} +  q_{2y+1} =  \frac13, & \quad y = 0, 1, 2,\\
      \N{q}_2^2 \le \frac{\lambda^2 + 3}{18}, &
    \end{cases}
  \end{align*}
  if $\pi$ is a complex MUB.
\end{cor}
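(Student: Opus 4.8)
The plan is to specialize Theorem~\ref{thm:qubit-compatibility} to the depolarized POVM $\pi' := \mathcal{D}^\dagger_\lambda(\pi)$, so the task reduces to identifying the vector $t'$ and the matrix $Q'$ entering Eq.~\eqref{eq:qubit-compatibility}. Writing a noiseless MUB element in Bloch form as $\pi_y = t_y\openone + \vec{S}_y\cdot\vec\sigma$ and using $\Tr[\pi_y]/2 = t_y$, the Heisenberg action of the depolarizing channel gives $\mathcal{D}^\dagger_\lambda(\pi_y) = \lambda\pi_y + (1-\lambda)\Tr[\pi_y]\,2^{-1}\openone = t_y\openone + \lambda\,\vec{S}_y\cdot\vec\sigma$, so $t' = t$ while $S' = \lambda S$, and hence $Q' = S'{S'}^{T} = \lambda^{2}Q$ with $Q = SS^{T}$ the matrix of the noiseless POVM. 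I would carry the argument out for $\lambda\neq 0$ (so that $(Q')^{+} = \lambda^{-2}Q^{+}$ and $(Q')^{+}Q' = Q^{+}Q$) and recover $\lambda = 0$ by continuity; for $\lambda = 0$ the conditions of Eq.~\eqref{eq:qubit-compatibility} reduce to $q = t$, which saturates the stated bounds.

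Next I would use the explicit geometry of the qubit MUBs. Up to (anti-)unitaries, the real MUB is the square POVM, with four Bloch directions $\{+\hat z,-\hat z,+\hat x,-\hat x\}$, and the complex MUB is the octahedral POVM, with six Bloch directions $\{\pm\hat z,\pm\hat x,\pm\hat y\}$; in both cases $\Tr[\pi_y] = N_d$ and $\vec{S}_y = \tfrac12 N_d\,\hat n_y$, so $t_y = N_d/2$. Ordering the outcomes so that $\{2y,2y+1\}$ labels an antipodal pair, $Q'_{y_0,y_1} \propto \hat n_{y_0}\cdot\hat n_{y_1}$ becomes block diagonal, each $2\times 2$ block being $\tfrac14 N_d^{2}\lambda^{2}$ times the matrix with diagonal entries $1$ and off-diagonal entries $-1$, i.e. twice the rank-one orthogonal projector onto the direction $(1,-1)$. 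Since an orthogonal projector is its own Moore--Penrose pseudoinverse, one reads off, blockwise, that $(Q')^{+}Q'$ is the projector onto $(1,-1)$ while $\openone - (Q')^{+}Q'$ is the projector onto $(1,1)$.

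I would then read off the two conditions of Eq.~\eqref{eq:qubit-compatibility}. The equality $(\openone - (Q')^{+}Q')(q - t) = 0$ says that $q - t$ has no symmetric component on any antipodal pair, i.e. $q_{2y} + q_{2y+1} = 2t_y = N_d$, which is $\tfrac12$ for the real MUB and $\tfrac13$ for the complex MUB --- exactly the claimed linear constraints. Imposing them, $q_{2y+1} - t_y = -(q_{2y} - t_y)$, so the block-diagonal form collapses to $(q-t)^{T}(Q')^{+}(q-t) = \tfrac{4}{N_d^{2}\lambda^{2}}\sum_y (q_{2y} - t_y)^{2}$, and the inequality $(q-t)^{T}(Q')^{+}(q-t)\le 1$ becomes $\sum_y(q_{2y}-t_y)^{2}\le N_d^{2}\lambda^{2}/4$. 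Finally, substituting $q_{2y+1} = N_d - q_{2y}$ into $\N{q}_2^{2} = \sum_y(q_{2y}^{2}+q_{2y+1}^{2})$ yields $\sum_y(q_{2y}-t_y)^{2} = \tfrac12\big(\N{q}_2^{2} - r N_d^{2}/2\big)$, with $r$ the number of antipodal pairs, so the inequality reads $\N{q}_2^{2}\le N_d^{2}(r+\lambda^{2})/2$; with $r=2,\ N_d=\tfrac12$ and $r=3,\ N_d=\tfrac13$ this is $\N{q}_2^{2}\le(\lambda^{2}+2)/8$ and $\N{q}_2^{2}\le(\lambda^{2}+3)/18$, respectively.

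Every step is elementary linear algebra; the part requiring the most care is the last one --- fixing the outcome ordering and the affine shift so that the block-diagonal quadratic form $\sum_y(q_{2y}-t_y)^{2}$ collapses exactly onto $\N{q}_2^{2}$ --- together with checking that $\mathcal{D}^\dagger_\lambda(\pi)$ remains a qubit POVM, and hence that Theorem~\ref{thm:qubit-compatibility} is applicable, over the relevant range of $\lambda$.
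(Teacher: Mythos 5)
Your proof is correct and follows essentially the same route as the paper's: specialize Theorem~\ref{thm:qubit-compatibility} to $\mathcal{D}^\dagger_\lambda(\pi)$, compute $t$ and $Q$ for the square and octahedral POVMs, read the equality condition as the antipodal-pair sum rules $q_{2y}+q_{2y+1}=2t_y$, and convert the quadratic form into $\N{q}_2^2$. Your blockwise treatment also gets the scaling $Q'=\lambda^2 Q$ right (the paper's displayed $Q$ matrices for the MUB case omit this factor, though its subsequent inequalities use the correct $\lambda^{-2}$), and your uniform parametrization by $N_d$ and the number of pairs $r$ handles both cases at once where the paper repeats the computation.
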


{\em  Conclusion and  outlooks}  ---  We addressed  the
problem  of  characterizing  the set  $\pds{}{\pi}$  of
input-output  correlations  compatible with  any  given
POVM $\pi$,  upon the input  of any set of  states.  We
provided    as   preliminary    results   a    complete
characterization   of   $\pds{}{\pi}$   in   terms   of
$\pds{1}{\pi}$, i.e.   the range of $\pi$,  only.  This
is in  stark contrast with the  analogous scenario with
respect to quantum channels,  which in general requires
more than one input~\cite{DBB16}.   Our main result was
to conclusively settle the  problem for qubit POVMs, by
deriving   a  full   characterization   of  the   range
$\pds{1}{\pi}$   for  any   given  qubit   POVM  $\pi$,
geometrically interpreted  as an ellipsoid  embedded in
an  $n$-dimensional real  space.   As applications,  we
explicitly discussed the particular cases of qubit real
and  complex SIC  POVMs  and MUBs  in  the presence  of
isotropic noise.  Our results  represent a further step
towards the characterization  of time-like correlations
compatible with  quantum theory~\cite{DBB16}.   In this
sense,    our     results    provide     the    optimal
device-independent test of quantum measurements.

An important problem  left open is that  of providing a
closed-form   full   characterization  of   the   range
$\pds{1}{\pi}$ for POVMs in dimensions higher than two.
An   interesting    related   problem   is    that   of
characterizing the set  of correlations compatible with
a given  family of  states, rather  than a  given POVM.
Finally,   we  would   like  to   mention  a   possible
application of our results in the context of {\em clean
  POVMs}~\cite{BKDPW05}, where it  has been shown that,
for   any  two   POVMs   $\pi$   and  $\Pi$,   whenever
$\pds{1}{\pi}  \subseteq  \pds{1}{\Pi}$, one  has  that
$\pi   =   \mathcal{L}(\Pi)$   for  some   linear   map
$\mathcal{L}$  which  is  positive on  the  support  of
$\Pi$.           The           implications          of
Theorem~\ref{thm:qubit-compatibility}  in this  context
will  be   addressed  by  the  present   authors  in  a
forthcoming work.

Remarkably,   any  observed   input-output  correlation
falsifies some hypothesized POVMs.  Thus, the presented
results  are  particularly  suitable  for  experimental
implementation, for example by using the techniques and
statistical analysis discussed in Refs.~\cite{HGMBAT12,
  ABCB11},  including an  analysis  of  the effects  of
finite statistics.

{\em Acknowledgements} M.~D.  acknowledges support from
the Singapore  Ministry of Education  Academic Research
Fund  Tier  3   (Grant  No.   MOE2012-T3-1-009).   F.~B
acknowledges  support   from  the  JSPS   KAKENHI,  No.
26247016.    V.~V.   acknowledges   support  from   the
Ministry  of Education  and  the  Ministry of  Manpower
(Singapore).

\newpage

\section{Supplemental Material}
\label{sec:supmat}

Here  we prove  those results  reported in  the article
``Device-independent tests of quantum measurements'' by
the present  authors (M.   Dall'Arno, S.   Brandsen, F.
Buscemi, and  V.  Vedral)  whose proofs,  being lengthy
but relatively straightforward,  has only been outlined
in the main text.   The numbering of statements follows
that of the article.

\setcounter{cor}{0}

\begin{cor}
  An output  probability distribution  $q = \{  q_y \}$
  belongs             to            the             set
  $\pds{1}{\mathcal{D}^\dagger_\lambda(\pi)}$   if  and
  only if
  \begin{align}
    \label{eq:realsic}
    \N{q}_2^2 \le \frac{\lambda^2 + 2}6,
  \end{align}
  if $\pi$ is a real SIC, and
  \begin{align}
    \label{eq:complexsic}
    \N{q}_2^2 \le \frac{\lambda^2 + 3}{12},
  \end{align}
  if $\pi$ is a complex SIC.
\end{cor}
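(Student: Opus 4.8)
The plan is to feed the depolarized POVM $\pi':=\mathcal{D}^\dagger_\lambda(\pi)$ into Theorem~\ref{thm:qubit-compatibility}, so that everything reduces to computing the vector $t$ and the matrix $Q=SS^T$ of Eq.~\eqref{eq:qubit-compatibility-02} for the depolarized trine and tetrahedron. Since Theorem~\ref{thm:qubit-compatibility} depends on the POVM only through $Q$ and $t$, which are invariant under (anti-)unitary conjugation, and since -- as recalled in the main text -- the only real (resp.\ complex) qubit SIC is the trine (resp.\ tetrahedral) POVM up to (anti-)unitaries, I may fix $\pi_y=\tfrac1n(\openone+\vec n_y\cdot\vec\sigma)$ with $n=3$ (resp.\ $n=4$), the $\vec n_y$ being unit Bloch vectors that sum to zero and are pairwise at the trine angle $\vec n_{y_0}\cdot\vec n_{y_1}=-\tfrac12$ (resp.\ the tetrahedral angle $-\tfrac13$). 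As $\mathcal{D}^\dagger_\lambda(\pi_y)=\lambda\pi_y+\tfrac{1-\lambda}{2}\Tr[\pi_y]\openone$ rescales only the Pauli component, one gets $\pi'_y=\tfrac1n\openone+\tfrac\lambda n\,\vec n_y\cdot\vec\sigma$, so $t$ is unchanged with $t_y=\tfrac1n$, while $S\to\lambda S$ with $S_{y,j}=\tfrac\lambda n(\vec n_y)_j$; hence $Q=\lambda^2 SS^T$ equals $\lambda^2 n^{-2}$ times the Gram matrix $G$ of the $\vec n_y$.

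The next step is to read off the spectrum of $G$. It has the form $aI-bJ$ with $J$ the all-ones matrix and $a>b>0$ fixed by the two inner-product values, and because $\sum_y\vec n_y=0$ the all-ones vector $\mathbf 1$ lies in its kernel, so $G$ acts as the scalar $a$ on $\mathbf 1^\perp$ (explicitly $a=\tfrac32$ for the trine and $a=\tfrac43$ for the tetrahedron). Hence, for $\lambda\neq 0$, $Q^+Q$ is the orthogonal projector onto $\mathbf 1^\perp$ and $\openone-Q^+Q=n^{-1}J$ is the rank-one projector onto $\mathbf 1$, in agreement with the discussion after Theorem~\ref{thm:qubit-compatibility} since the depolarized SIC remains informationally complete. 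The equality in Eq.~\eqref{eq:qubit-compatibility} therefore collapses to $\sum_y q_y=\sum_y t_y=1$, automatically true for a probability vector; and, evaluated on $q-t\in\mathbf 1^\perp$ where $Q^+$ acts as the scalar $(\lambda^2 n^{-2}a)^{-1}$, the inequality in Eq.~\eqref{eq:qubit-compatibility} becomes $\N{q-t}_2^2\le\lambda^2 n^{-2}a$. Expanding $\N{q-t}_2^2=\N{q}_2^2-\tfrac2n\sum_y q_y+\N{t}_2^2=\N{q}_2^2-\tfrac1n$ and inserting $\lambda^2 n^{-2}a=\tfrac{\lambda^2}{6}$ for the trine and $\tfrac{\lambda^2}{12}$ for the tetrahedron yields Eqs.~\eqref{eq:realsic} and \eqref{eq:complexsic}.

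The one point requiring a little care is the degenerate value $\lambda=0$, where $Q=0$, so $Q^+Q=0$ and the equality in Eq.~\eqref{eq:qubit-compatibility} forces $q=t$ outright rather than reducing to normalization. This is still consistent with the stated formulas: among probability vectors in $\mathbb{R}^n$, Cauchy--Schwarz gives $\N{q}_2^2\ge\tfrac1n(\sum_y q_y)^2=\tfrac1n$, so the bound $\N{q}_2^2\le\tfrac1n$ -- which equals $\tfrac{0+2}{6}$ for $n=3$ and $\tfrac{0+3}{12}$ for $n=4$ -- is saturated precisely at $q=t$. Apart from this normalization bookkeeping and the identification of the Gram-matrix spectrum, the computation is mechanical, and I anticipate no genuine obstacle.
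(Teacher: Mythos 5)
Your proposal is correct and follows essentially the same route as the paper: specialize Theorem~\ref{thm:qubit-compatibility} to the depolarized trine/tetrahedron, note that depolarization leaves $t$ fixed and rescales $S\to\lambda S$, and reduce the equality to normalization and the inequality to a bound on $\N{q}_2^2$, treating $\lambda=0$ separately. The only difference is presentational -- you diagonalize the Gram matrix $aI-bJ$ abstractly where the paper writes out $Q$ explicitly and invokes the identity $\sum_{y<z}q_yq_z=\tfrac12(1-\sum_yq_y^2)$ -- and your version incidentally makes the $\lambda^2$ scaling of $Q$ explicit, which the paper's displayed matrices (written with a bare $\lambda$ prefactor) obscure.
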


\begin{proof}
  Let us first prove the  real case.  The POVM $\pi_y =
  \ket{\pi_y}\bra{\pi_y}$   with  $\ket{\pi_y}   =  U^y
  \ket{0}$  where $U  := e^{-i  \frac{\pi}3 Y}$  is the
  unique (up to unitaries  and anti-unitaries) real SIC
  POVM of a qubit. Thus, $t$ and $Q$ are given by
  \begin{align*}
    t = \frac13
    \begin{pmatrix}
      1 \\
      1 \\
      1
    \end{pmatrix},
    \qquad
    Q = \frac{\lambda}{18}
    \begin{pmatrix}
      2 & -1 & -1 \\
      -1 & 2 & -1 \\
      -1 & -1 & 2
    \end{pmatrix}.
  \end{align*}
  
  Let us  first take  $\lambda >  0$.  The  equality in
  Eqs.~\eqref{eq:qubit-compatibility}           becomes
  $\sum_{y=0}^2   q_y   =   1$.   The   inequality   in
  Eqs.~\eqref{eq:qubit-compatibility} becomes
  \begin{align*}
    (q-t)^{T}  Q^{+}  (q-t)  =  \lambda^{-2}  \left(  6
      \sum_{y=0}^2 q_y^2 - 2 \right) \le 1,
  \end{align*}
  through use of  the identity $\sum_{y < z}  q_y q_z =
  \frac{1}{2} (1 - \sum_y q_y^2)$.

  Let  us now  take  $\lambda =  0$.   The equality  in
  Eqs.~\eqref{eq:qubit-compatibility} becomes  $q = t$.
  The inequality in Eqs.~\eqref{eq:qubit-compatibility}
  is  trivially satisfied.  Then Eq.~\eqref{eq:realsic}
  immediately follows.

  Let us then prove the complex case. The POVM $\pi_y =
  \frac12 \ket{\pi_y}\bra{\pi_y}$ with
  \begin{align*}
    \ket{\pi_y}     :=     \sigma_y     \left(     \cos
      \frac{\arctan{\sqrt2}}2         \ket{0}         +
      e^{i\frac{\pi}4}   \sin   \frac{\arctan{\sqrt2}}2
      \ket{1} \right),
  \end{align*}
  where  $\sigma  := (\openone_2,  \sigma_X,  \sigma_Y,
  \sigma_Z)$ are the Pauli  matrices, is the unique (up
  to unitaries and anti-unitaries) SIC POVM of a qubit.
  Thus, $t$ and $Q$ are given by
  \begin{align*}
    t = \frac14
    \begin{pmatrix}
      1 \\
      1 \\
      1 \\
      1
    \end{pmatrix},
    \qquad
    Q = \frac{\lambda}{48}
    \begin{pmatrix}
      3 & -1 & -1 & -1 \\
      -1 & 3 & -1 & -1 \\
      -1 & -1 & 3 & -1 \\
      -1 & -1 & -1 & 3
    \end{pmatrix}.
  \end{align*}

  Let us  first take  $\lambda >  0$.  The  equality in
  Eqs.~\eqref{eq:qubit-compatibility}           becomes
  $\sum_{y=0}^3   q_y   =   1$.   The   inequality   in
  Eqs.~\eqref{eq:qubit-compatibility} becomes
  \begin{align*}
    (q-t)^{T}  Q^+  (q-t)   =  \lambda^{-2}  \left(  12
      \sum_{y=0}^2 q_y^2 - 3 \right) \le 1,
  \end{align*}
  through use of  the identity $\sum_{y < z}  q_y q_z =
  \frac{1}{2} (1 - \sum_y q_y^2)$.

  Let  us now  take  $\lambda =  0$.   The equality  in
  Eqs.~\eqref{eq:qubit-compatibility} becomes  $q = t$.
  The inequality in Eqs.~\eqref{eq:qubit-compatibility}
  is         trivially         satisfied.          Then
  Eq.~\eqref{eq:complexsic} immediately follows.
\end{proof}

\begin{cor}
  An output probability distribution $q  = \{ q_y \}$ b
  belongs             to            the             set
  $\pds{1}{\mathcal{D}^\dagger_\lambda(\pi)}$   if  and
  only if
  \begin{align}
    \label{eq:realmub}
    \begin{cases}
      q_{2y} +  q_{2y+1}  =  \frac12, & \quad y = 0, 1,\\
      \N{q}_2^2 \le \frac{\lambda^2 + 2}8, &
      \end{cases}
  \end{align}
  if $\pi$ is a real MUB,and
  \begin{align}
    \label{eq:complexmub}
    \begin{cases}
      q_{2y} +  q_{2y+1} =  \frac13, & \quad y = 0, 1, 2,\\
      \N{q}_2^2 \le \frac{\lambda^2 + 3}{18}, &
    \end{cases}
  \end{align}
  if $\pi$ is a complex MUB.
\end{cor}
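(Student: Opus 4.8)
The plan is to follow verbatim the strategy of the preceding corollary: fix a canonical qubit real (resp.\ complex) MUB, push it through the depolarizing map $\mathcal D_\lambda^\dagger$, compute in closed form the vector $t$ and the matrix $Q=SS^T$ of Theorem~\ref{thm:qubit-compatibility}, and then specialise the two conditions of Eq.~\eqref{eq:qubit-compatibility}.

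First I would pick representatives. For the real case take the square POVM $\pi_{z,t}=\tfrac12\ket{z_t}\!\bra{z_t}$, where $t\in\{0,1\}$ labels the two real mutually unbiased bases of a qubit and $z\in\{0,1\}$ the element within each basis (so $\pi_{0,t}+\pi_{1,t}=\tfrac12\openone$), and for the complex case the octahedral POVM $\pi_{z,t}=\tfrac13\ket{z_t}\!\bra{z_t}$ with $t\in\{0,1,2\}$. Applying $\mathcal D_\lambda^\dagger(\pi_y)=\lambda\pi_y+(1-\lambda)\tfrac{\Tr[\pi_y]}{2}\openone$ leaves the trace unchanged and merely rescales the Pauli part by $\lambda$, so in the parametrisation $\pi_y=t_y\openone+\sum_k S_{y,k}\sigma_k$ one gets $t_y=\tfrac14$ for all $y$ (real) resp.\ $t_y=\tfrac16$ (complex), while each row of $S$ is $\tfrac\lambda4$ resp.\ $\tfrac\lambda6$ times one of the $\pm$ coordinate unit vectors of $\mathbb R^3$, the two rows of a common basis being opposite and rows from different bases orthogonal. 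Hence $Q=SS^T$ is \emph{block diagonal}: one $2\times2$ block $\tfrac{\lambda^2}{16}\bigl(\begin{smallmatrix}1&-1\\-1&1\end{smallmatrix}\bigr)$ per basis in the real case, one block $\tfrac{\lambda^2}{36}\bigl(\begin{smallmatrix}1&-1\\-1&1\end{smallmatrix}\bigr)$ per basis in the complex case.

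With $Q$ in this form, for $\lambda>0$ everything is bookkeeping. Each block has rank one, so $Q^+Q$ is block diagonal with blocks $\tfrac12\bigl(\begin{smallmatrix}1&-1\\-1&1\end{smallmatrix}\bigr)$ and $\openone-Q^+Q$ with blocks $\tfrac12\bigl(\begin{smallmatrix}1&1\\1&1\end{smallmatrix}\bigr)$; the equality $(\openone-Q^+Q)(q-t)=0$ then reads, block by block, $q_{2y}+q_{2y+1}=t_{2y}+t_{2y+1}$, i.e.\ $\tfrac12$ (real) resp.\ $\tfrac13$ (complex) --- precisely the linear constraints in Eqs.~\eqref{eq:realmub}--\eqref{eq:complexmub}. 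Likewise $Q^+$ has block $\tfrac4{\lambda^2}\bigl(\begin{smallmatrix}1&-1\\-1&1\end{smallmatrix}\bigr)$ resp.\ $\tfrac9{\lambda^2}\bigl(\begin{smallmatrix}1&-1\\-1&1\end{smallmatrix}\bigr)$, and since $t_{2y}=t_{2y+1}$ the quadratic form $(q-t)^TQ^+(q-t)$ collapses to $\tfrac4{\lambda^2}$ resp.\ $\tfrac9{\lambda^2}$ times $\sum_y(q_{2y}-q_{2y+1})^2$. Using the polarisation identity $(q_{2y}-q_{2y+1})^2=2(q_{2y}^2+q_{2y+1}^2)-(q_{2y}+q_{2y+1})^2$ together with the above linear constraint (the blockwise counterpart of the identity $\sum_{y<z}q_yq_z=\tfrac12(1-\sum_yq_y^2)$ used in the SIC proof) turns this into an affine function of $\N{q}_2^2$; imposing $(q-t)^TQ^+(q-t)\le1$ and rearranging yields $\N{q}_2^2\le(\lambda^2+2)/8$ (real) resp.\ $\N{q}_2^2\le(\lambda^2+3)/18$ (complex). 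The degenerate case $\lambda=0$ is immediate: $Q=0$ forces $q=t$ through the equality, the inequality is vacuous, and $q=t$ satisfies the stated conditions (the norm inequality being saturated).

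I do not expect a genuine obstacle: the only nontrivial ingredient, the pseudoinverse of $Q$, is trivialised by the block-diagonal, rank-one-per-block structure inherited from the two-element substructure of each mutually unbiased basis. What needs care is purely bookkeeping --- fixing the index convention so that each pair $(2y,2y+1)$ really lists the two effects of a common basis, and propagating the scalar prefactors ($\tfrac\lambda4$ vs.\ $\tfrac\lambda6$, hence $\tfrac{\lambda^2}{16}$ vs.\ $\tfrac{\lambda^2}{36}$, and their pseudoinverses) consistently, so that the final denominators come out as $8$ and $18$ rather than off by a numerical factor.
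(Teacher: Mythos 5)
Your proposal is correct and follows essentially the same route as the paper's own proof: fix the canonical square/octahedral representatives, compute $t$ and the block-diagonal $Q=SS^T$, read off the per-basis linear constraints from $\openone-Q^+Q$, and convert the quadratic form via the polarisation identity (the paper uses the equivalent identity $q_{0}q_{1}=-\tfrac12(q_0^2+q_1^2)+\tfrac18$, resp.\ $+\tfrac1{18}$), treating $\lambda=0$ separately. All your numerical factors check out (indeed your $\lambda^2$ prefactors on $Q$ are more carefully stated than the paper's).
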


\begin{proof}
  Let us first prove the  real case.  The POVM $\pi_y =
  \ket{\pi_y}\bra{\pi_y}$         with        $\sigma_k
  \ket{\pi_{2k,2k+1}}  = \pm  \ket{\pi_{2k,2k+1}}$ with
  $k  =  0, 1$  is  the  unique  (up to  unitaries  and
  anti-unitaries) real SIC POVM  of a qubit.  Thus, $t$
  and $Q$ are given by
  \begin{align*}
    t = \frac14
    \begin{pmatrix}
      1 \\
      1 \\
      1 \\
      1
    \end{pmatrix},
    \qquad
    Q = \frac1{16}
    \begin{pmatrix}
      1 & -1 & 0 & 0 \\
      -1 & 1 & 0 & 0 \\
      0 & 0 & 1 & -1 \\
      0 & 0 & -1 & 1
    \end{pmatrix}.
  \end{align*}

  Let us  first take  $\lambda >  0$.  The  equality in
  Eqs.~\eqref{eq:qubit-compatibility}  becomes  $q_0  +
  q_1  =  q_2   +  q_3  =  1/2$.    The  inequality  in
  Eqs.~\eqref{eq:qubit-compatibility} becomes
  \begin{align*}
    (q-t)^{T}  Q^+   (q-t)  =  \lambda^{-2}   \left(  8
      \sum_{y=0}^2 q_y^2 - 2 \right) \le 1,
  \end{align*}
  through   use  of   the  identity   $q_{0}  q_{1}   =
  -\frac{1}{2}(q_{0}^{2}+q_{1}^{2})+\frac{1}{8}$.

  Let  us now  take  $\lambda =  0$.   The equality  in
  Eqs.~\eqref{eq:qubit-compatibility} becomes  $q = t$.
  The inequality in Eqs.~\eqref{eq:qubit-compatibility}
  is trivially  satisfied.  Then Eq.~\eqref{eq:realmub}
  immediately follows.

  Let us then prove the complex case. The POVM $\pi_y =
  \ket{\pi_y}\bra{\pi_y}$         with        $\sigma_k
  \ket{\pi_{2k,2k+1}}  = \pm  \ket{\pi_{2k,2k+1}}$ with
  $k =  0, 1,  2$ is  the unique  (up to  unitaries and
  anti-unitaries) real SIC POVM  of a qubit.  Thus, $t$
  and $Q$ are given by
  \begin{align*}
    t = \frac16
    \begin{pmatrix}
      1 \\
      1 \\
      1 \\
      1 \\
      1 \\
      1
    \end{pmatrix}.
    \qquad
    Q = \frac1{36}
    \begin{pmatrix}
      1 & -1 & 0 & 0 & 0 & 0 \\
      -1 & 1 & 0 & 0 & 0 & 0\\
      0 & 0 & 1 & -1 & 0 & 0 \\
      0 & 0 & -1 & 1 & 0 & 0 \\
      0 & 0 & 0 & 0 & 1 & -1 \\
      0 & 0 & 0 & 0 & -1 & 1
    \end{pmatrix}.
  \end{align*}

  Let  us now  take  $\lambda >  0$.   The equality  in
  Eqs.~\eqref{eq:qubit-compatibility}  becomes  $q_0  +
  q_1 = q_2  + q_3 = q_4 + q_5  = 1/3$.  The inequality
  in Eqs.~\eqref{eq:qubit-compatibility} becomes
  \begin{align*}
    (q-t)^{T}  Q^+  (q-t)   =  \lambda^{-2}  \left(  18
      \sum_{y=0}^2 q_y^2 - 3 \right) \le 1,
  \end{align*}
  through   use  of   the  identity   $q_{0}  q_{1}   =
  -\frac{1}{2}(q_{0}^{2}+q_{1}^{2})+\frac{1}{18}$.

  Let  us now  take  $\lambda =  0$.   The equality  in
  Eqs.~\eqref{eq:qubit-compatibility} becomes  $q = t$.
  The inequality in Eqs.~\eqref{eq:qubit-compatibility}
  is         trivially          satisfied.         Then
  Eq.~\eqref{eq:complexmub} immediately follows.
\end{proof}

\begin{prop}
  \label{prop:corrmatrix}
  For any  qubit POVM $\{  \pi_y \}$, one has that 
  \begin{align*}
    -\frac14 \le Q_{y_0,  y_1} := \frac12 \Tr[\pi_{y_0}
    \pi_{y_1}] -  \frac14 \Tr[\pi_{y_0}] \Tr[\pi_{y_1}]
    \le \frac14,
  \end{align*}
  where  the bounds  are saturated  if $\pi_{y_0}$  and
  $\pi_{y_1}$ are rank-one, unit-trace, orthogonal (for
  the lower bound) or  coincident (for the upper bound)
  effects.
\end{prop}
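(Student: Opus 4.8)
The plan is to build on the factorization $Q = S S^{T}$ obtained in the proof of Theorem~\ref{thm:qubit-compatibility}, which gives $Q_{y_0,y_1} = \sum_{k=1}^{3} S_{y_0,k} S_{y_1,k} = \vec s_{y_0}\cdot\vec s_{y_1}$, the Euclidean inner product of the rescaled Bloch vectors $\vec s_y := (S_{y,1},S_{y,2},S_{y,3})$ appearing in the parametrization $\pi_y = t_y \openone + \sum_{k=1}^{3} S_{y,k}\sigma_k$. First I would rewrite the operator constraints on a single POVM element in terms of $(t_y,\vec s_y)$: the eigenvalues of $\pi_y$ are $t_y \pm \N{\vec s_y}_2$, so $\pi_y \ge 0$ is equivalent to $t_y \ge \N{\vec s_y}_2$, while $\pi_y \le \openone$ (which holds because $\openone - \pi_y = \sum_{y'\neq y}\pi_{y'} \ge 0$) is equivalent to $1 - t_y \ge \N{\vec s_y}_2$. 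Summing the two inequalities yields the key estimate $\N{\vec s_y}_2 \le \frac12$ for every $y$.

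The two-sided bound on $Q_{y_0,y_1}$ is then just the Cauchy--Schwarz inequality,
\begin{align*}
  \left| Q_{y_0,y_1} \right| = \left| \vec s_{y_0}\cdot\vec s_{y_1} \right| \le \N{\vec s_{y_0}}_2\,\N{\vec s_{y_1}}_2 \le \frac14 ,
\end{align*}
that is, $-\frac14 \le Q_{y_0,y_1} \le \frac14$.

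For the saturation claims I would trace back the two inequalities just used. Equality in $\N{\vec s_y}_2 \le \frac12$ forces both single-element constraints to be tight, hence $t_y = \frac12$ and $\N{\vec s_y}_2 = \frac12$, so $\pi_y$ has eigenvalues $1$ and $0$, i.e.\ it is a rank-one, unit-trace effect. Equality in Cauchy--Schwarz forces $\vec s_{y_0}$ and $\vec s_{y_1}$ to be collinear, so together with the norm conditions $Q_{y_0,y_1} = \frac14$ requires $\vec s_{y_0} = \vec s_{y_1}$, i.e.\ $\pi_{y_0} = \pi_{y_1}$ (coincident), whereas $Q_{y_0,y_1} = -\frac14$ requires $\vec s_{y_1} = -\vec s_{y_0}$, in which case $\pi_{y_0}\pi_{y_1} = \frac14\openone - \big(\sum_{k=1}^{3} S_{y_0,k}\sigma_k\big)^{2} = \big(\frac14 - \N{\vec s_{y_0}}_2^{2}\big)\openone = 0$, so the two rank-one unit-trace effects are orthogonal. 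Conversely --- and this is the direction actually asserted in the statement --- substituting a coincident (resp.\ orthogonal) pair of rank-one unit-trace effects directly returns $Q_{y_0,y_1} = \frac14$ (resp.\ $-\frac14$). For completeness I would also note that if $\pi_{y_0}$ and $\pi_{y_1}$ are both proportional to $\openone$ then $\vec s_{y_0} = \vec s_{y_1} = 0$ and $Q_{y_0,y_1} = 0$.

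There is no real obstacle here: the whole argument is elementary Cauchy--Schwarz once the Bloch parametrization and the factorization $Q = SS^{T}$ are in hand. The only point that needs a little care is the bookkeeping in the saturation step --- namely that it is the \emph{sum} $t_y + (1-t_y) = 1 \ge 2\N{\vec s_y}_2$ of the two single-element positivity constraints, rather than either one separately, that is tight at the extremal values, and it is this that pins $t_y$ to $\frac12$ and thereby forces $\pi_y$ to be rank-one and unit-trace.
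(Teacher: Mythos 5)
Your proof is correct, but it takes a different route from the paper's. The paper works directly at the operator level: it relaxes the problem to maximizing/minimizing $2\Tr[AB]-\Tr[A]\Tr[B]$ over $0\le A\le\openone$, $0\le B\le\openone$, writes $A=(a_0-a_1)\phi_0+a_1\openone$ in terms of its spectral data so that the objective collapses to $(a_0-a_1)\Tr[(\phi_0-\phi_1)B]$, and then optimizes in two stages (over $B$ for fixed $A$, then over the spectral gap $a_0-a_1\le 1$), checking at the end that the extremizers satisfy the relaxed POVM constraint. You instead exploit the Gram-matrix structure $Q_{y_0,y_1}=\vec s_{y_0}\cdot\vec s_{y_1}$ already implicit in the identity $Q=SS^T$ from the proof of Theorem~\ref{thm:qubit-compatibility}, derive $\N{\vec s_y}_2\le\tfrac12$ from $0\le\pi_y\le\openone$, and finish with Cauchy--Schwarz. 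Both arguments are elementary and equally rigorous; yours has the advantage of making the ``covariance matrix'' interpretation of $Q$ completely transparent and of reusing machinery the paper has already set up, while the paper's two-stage operator optimization is self-contained and characterizes the extremizers without needing the equality case of Cauchy--Schwarz. Your saturation analysis is actually slightly stronger than what the statement requires (you prove ``only if'' as well as ``if''), and the observation that tightness of the \emph{sum} $t_y+(1-t_y)\ge 2\N{\vec s_y}_2$ pins $t_y=\tfrac12$ is exactly the right bookkeeping.
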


\begin{proof}
  The problem  consists in maximizing  (minimizing) the
  quantity $2 \Tr[A  B] - \Tr[A] \Tr[B]$ over  $0 \le A
  \le  \openone$, $0  \le  B \le  \openone$, under  the
  constraint that either $A  = B$ (diagonal elements of
  $Q$), or $A +  B \le \openone$ (off-diagonal elements
  of $Q$).  In the  following we relax this constraint,
  at the  end we will  show that our  solutions satisfy
  it.
  
  For any given $A$ with  eigenvalues $a_0 \ge a_1$ and
  corresponding eigenvectors $\phi_0$ and $\phi_1$, one
  has $A = (a_0 - a_1) \phi_0 + a_1 \openone$, hence
  \begin{align*}
    2  \Tr[A  B]   -  \Tr[A]  \Tr[B]  =   (a_0  -  a_1)
    \Tr[(\phi_0 - \phi_1) B].
  \end{align*}
  Therefore, the  maximum (minimum,  respectively) over
  $B$  for fixed  $A$  is  $a_0 -  a_1$  ($a_1 -  a_0$,
  respectively)  attained  when  $B  =  \phi_0$  ($B  =
  \phi_1$, respectively).
  
  Now, the maximum (minimum,  respectively) over $A$ of
  $a_0 - a_1$ ($a_1 - a_0$, respectively) is $1$ ($-1$,
  respectively),  attained  when  $A$  is  a  rank-one,
  unit-trace effect.   Hence, the maximum over  $A$ and
  $B$  is  attained  when  $A$ and  $B$  are  rank-one,
  unit-trace,  coincident effects,  that  is  $A =  B$,
  while the minimum  over $A$ and $B$  is attained when
  $A$  and  $B$  are rank-one,  unit-trace,  orthogonal
  effects,  that  is $A  +  B  = \openone$.  Thus,  the
  constraint is verified.
\end{proof}
\end{document}